\def\BibTeX{{\rm B\kern-.05em{\sc i\kern-.025em b}\kern-.08em
    T\kern-.1667em\lower.7ex\hbox{E}\kern-.125emX}}
\newtheorem{theorem}{Theorem}
\begin{document}

\title{AFDM: A Full Diversity Next Generation Waveform for High Mobility Communications}

\author{
    \IEEEauthorblockN{Ali Bemani\IEEEauthorrefmark{1}, Nassar Ksairi\IEEEauthorrefmark{2}, and Marios Kountouris\IEEEauthorrefmark{1} }
    \IEEEauthorblockA{\IEEEauthorrefmark{1}Communication Systems Department, EURECOM, Sophia Antipolis, France
    \\\{ali.bemani, marios.kountouris\}@eurecom.fr}
    \IEEEauthorblockA{\IEEEauthorrefmark{2}Mathematical and Algorithmic Sciences Lab, Huawei France R\&D, Paris, France,
    \\ nassar.ksairi@huawei.com}
}

\maketitle
\begin{abstract}
We present Affine Frequency Division Multiplexing (AFDM), a new chirp-based multicarrier transceiver scheme for high mobility communications in next-generation wireless systems. AFDM is based on discrete affine Fourier transform (DAFT), a generalization of discrete Fourier transform characterized with two parameters that can be adapted to better cope with doubly dispersive channels. Based on the derived input-output relation, the DAFT parameters underlying AFDM are set in such a way to avoid that time domain channel paths with distinct delays or Doppler frequency shifts overlap in the DAFT domain. The resulting DAFT domain impulse response thus conveys a full delay-Doppler representation of the channel. We show that AFDM can achieve the full diversity of linear time-varying (LTV) channels. Our analytical results are validated through numerical simulations, which evince that AFDM outperforms state-of-the-art multicarrier schemes in terms of bit error rate (BER) in doubly dispersive channels.
\end{abstract}

\begin{IEEEkeywords}
AFDM, affine Fourier transform, chirp modulation, diversity order, linear time-varying channels, doubly dispersive channels.
\end{IEEEkeywords}
\section{Introduction}
Next generation wireless systems (beyond 5G/6G) are envisioned to support a wide spectrum of services and applications, including reliable communications at high carrier frequencies in high mobility environments (e.g., high-speed railway systems, vehicular-to-infrastructure, and vehicular-to-vehicular communications). Various multicarrier techniques, such as orthogonal frequency division multiplexing (OFDM) and single-carrier frequency division multiple access (SC-FDMA), have been deployed in standardized communication systems. These schemes have been shown to achieve satisfactory or even optimal performance in time-invariant frequency selective channels. However, orthogonality among subcarriers is destroyed due to large Doppler frequency shifts in high mobility scenarios, resulting in deteriorated performance. 

The optimal way to cope with time-varying multipath (also called doubly dispersive) channels is to let the information symbols modulate a set of orthogonal eigenfunctions of the channel (input/output relation) and project the received signal over the same set of eigenfunctions at the receiver. However, in contrast to linear time-invariant (LTI) systems in which complex exponentials are eigenfunctions, finding an orthonormal basis for general linear time-varying (LTV) channels is not trivial. Polynomial phase models that generalize complex exponentials are often used as alternative bases. Interestingly, there are special cases for which an orthonormal basis is formed by chirps, i.e., complex exponentials with linearly varying instantaneous frequencies. Despite not being optimal in general, chirp-based techniques can be adjusted to the channel characteristics as a means to achieve near-optimal performance \cite{erseghe2005multicarrier}. 
Using a chirp basis instead of the sine basis for transmission over time-varying channels is first introduced in \cite{martone2001multicarrier}, where fractional Fourier transform (FrFT) is used to generate multi-chirp signals. However, the approximation used for discretizing the continuous-time FrFT leads to imperfect orthogonality among chirp subcarriers and hence to performance degradation. A multicarrier technique based on a specific discretization of the affine Fourier transform (AFT) is proposed in \cite{erseghe2005multicarrier}. Discrete AFT (DAFT) parameters are properly tuned using partial channel state information (CSI), namely the delay-Doppler profile of the channel (known delays and the Doppler shifts of channel paths). That way, the resulting multicarrier waveform, referred to as DAFT-OFDM in the sequel, is equivalent to OFDM with reduced inter-carrier interference (ICI) on doubly dispersive channels. While this property enables low-complexity detection, being equivalent to OFDM implies a diversity order close to or equal to that of OFDM (which is known to be low without channel coding). Moreover, the delay-Doppler profile of the channel is required at the transmitter side to properly tune the DAFT-OFDM parameters. Orthogonal chirp division multiplexing (OCDM) \cite{ouyang2016orthogonal}, which is based on the discrete Fresnel transform - a special case of DAFT, is shown to outperform OFDM in time-dispersive channels thanks to a higher diversity order. However, OCDM cannot achieve full diversity in general LTV channels, since its diversity order depends on the delay-Doppler profile of the channel. Orthogonal time frequency space (OTFS) modulation \cite{hadani2017orthogonal}, a recently proposed waveform for high mobility communications, is a two-dimensional (2D) modulation technique that uses the delay-Doppler domain for multiplexing information. Its diversity order without channel coding is shown to be one \cite{surabhi2019diversity}, however OTFS can be made to achieve full diversity with a phase rotation scheme using transcendental numbers. The concept of \emph{effective} diversity order, i.e., the diversity order in the finite signal-to-noise ratio (SNR) regime, is introduced in \cite{raviteja2019effective}, showing that OTFS may achieve full effective diversity.

In this paper, we take a fresh look at chirp-based multicarrier systems and propose Affine Frequency Division Multiplexing (AFDM), a novel DAFT-based waveform using multi-chirp signals. We first derive the input-output relation, which allows us to adapt the AFDM parameters in a way to avoid that time domain channel paths with distinct delays or Doppler frequency shifts overlap in the DAFT domain. We analytically show that AFDM achieves the full diversity of LTV channels, as opposed to OFDM, DAFT-OFDM, and OCDM. Compared with OTFS, AFDM has comparable performance in terms of BER with lower complexity though. In contrast to the simpler, one-dimensional (1D) transform in AFDM, the 2D transform in OTFS has several drawbacks in terms of pilot overhead and multiuser multiplexing overhead \cite{raviteja2019embedded}. In a nutshell, AFDM is a promising new waveform for high-mobility communications in next generation wireless networks. 
\section{Affine Fourier Transform}
In this section, we introduce the AFT and the DAFT, which form the basis of AFDM. The AFT, also known as the linear canonical transform \cite{LCT} or generalized Fresnel transform, is a continuous transformation that maps a continuous-time signal $s(t)$ into $S(f)$ as follows \cite{pei2001relations}:
\begin{equation}
    S(f) = \begin{cases} \int_{-\infty}^{+\infty}s(t){{\rm e}^{-j({a\over 2b}f^{2}+{1\over b}ft+{d\over 2b}t^{2})}\over \sqrt{2\pi\vert b\vert }}{\rm d}t,& b\neq 0 \\s(df){e^{-j{cd\over 2}f^{2}}\over \sqrt{a}}, \qquad \qquad \qquad \quad \,& b=0\end{cases}\label{AFT}
\end{equation}
provided that parameters ($a,b,c,d$) form  an invertible matrix
$\textrm{M} = \left[{\begin{array}{cc}
a & b \\ c & d
\end{array}}\right]$ with determinant $ad - bc = 1$.
AFT is an integral transformation that generalizes many standard transforms, such as Fourier transform (0,1,-1,0), Laplace transform (0,$j$,$j$,0), and $\theta$-order fractional Fourier transform ($\cos\theta,\sin\theta,-\sin\theta,\cos\theta$). Gauss-Weierstrass, Fresnel, and Bargmann transforms are also special cases. It can be visualized as the action of the special linear group SL$_2(\mathbb{R})$ on the time–frequency plane and is a particular case of a phase space transform, the special affine Fourier transform (SAFT) \cite{SAFT}.

Possible discretizations of the AFT have been discussed in \cite{erseghe2005multicarrier,candan2000discrete, pei2000closed}. In the remainder, we employ the DAFT from \cite{erseghe2005multicarrier}, for which it is shown that the periodicity considerations in Fourier analysis while sampling $s(t)$ and $S(f)$ can be generalized by ensuring that the following constraints hold
\begin{equation}
    s(nT + kT_p)e^{-j2\pi k_1(nT+kT_p)^2} = s(nT)e^{-j2\pi k_1(nT)^2},\label{chrip_per}
\end{equation}
\begin{equation}
    S(nF + kF_p)e^{-j2\pi k_2(nF+kF_p)^2} = S(nF)e^{-j2\pi k_2(nF)^2}
\end{equation}
where $T$ and $F$ are sampling quanta, $T_p$ and $F_p$ are signal periods ($T_p = NT$ and $F_p = NF$), and $N$ is the number of samples.
The relation between $F$ and $T$ is $F = 1/\beta T_p = 1/\beta NT $ and $k_1 = \frac{d}{4\pi b}$, $k_2 = \frac{a}{4\pi b}$ and $\beta =\frac{1}{2\pi b} $. For our purposes, only the constraint \eqref{chrip_per} matters, whose sole practical effect is on the kind of prefix one should add to a DAFT-based multicarrier symbol. Arranging samples $s(nT)$ and $S(nF)$ in the period $[0, N)$ in vectors
\begin{equation}
\begin{aligned}
    \mathbf{s} &= (s_0, s_1, ..., s_{N-1}), \quad s_n = s(nT)\\
    \mathbf{S} &= (S_0, S_1, ..., S_{N-1}), \quad S_n = S(nF),
    \label{eq:chirp_periodicity}
\end{aligned}
\end{equation}
DAFT is expressed as 
\begin{equation}
    \mathbf{S} = \mathbf{As}, \quad \mathbf{A} = \mathbf {\Lambda}_{c_2}{\mathbf F}{\mathbf \Lambda}_{c_1}
\end{equation}
where $\mathbf{F}$ is the discrete Fourier transform (DFT) matrix with entries $e^{-j2\pi mn/N}/\sqrt{N}$ and 
\begin{equation}
    \mathbf{\Lambda}_{c}= {\rm diag} (e^{-j2\pi cn^{2}}, n=0, 1, \, \ldots\,, N-1),
\end{equation}
with $c_1 = k_1T^2$ and $c_2 = k_2F^2$ and
\begin{equation}
    \mathbf{A}^{-1} = \mathbf{A}^H =  {\mathbf \Lambda}_{c_1}^H{\mathbf F}^H{\mathbf \Lambda}_{c_2}^H.
\end{equation}

\section{Affine Frequency Division Multiplexing}
\label{sec:afdm}
In this section, we introduce AFDM, a DAFT-based multicarrier transceiver concept. Inverse DAFT (IDAFT) is used to map data symbols into the time domain and DAFT is performed at the receiver to obtain the effective discrete affine Fourier domain channel response to the transmitted data, as shown in Fig. \ref{fig:AFDM_blokcdiagrma}.
\begin{figure}
  \centering

  \includegraphics[scale=.5]{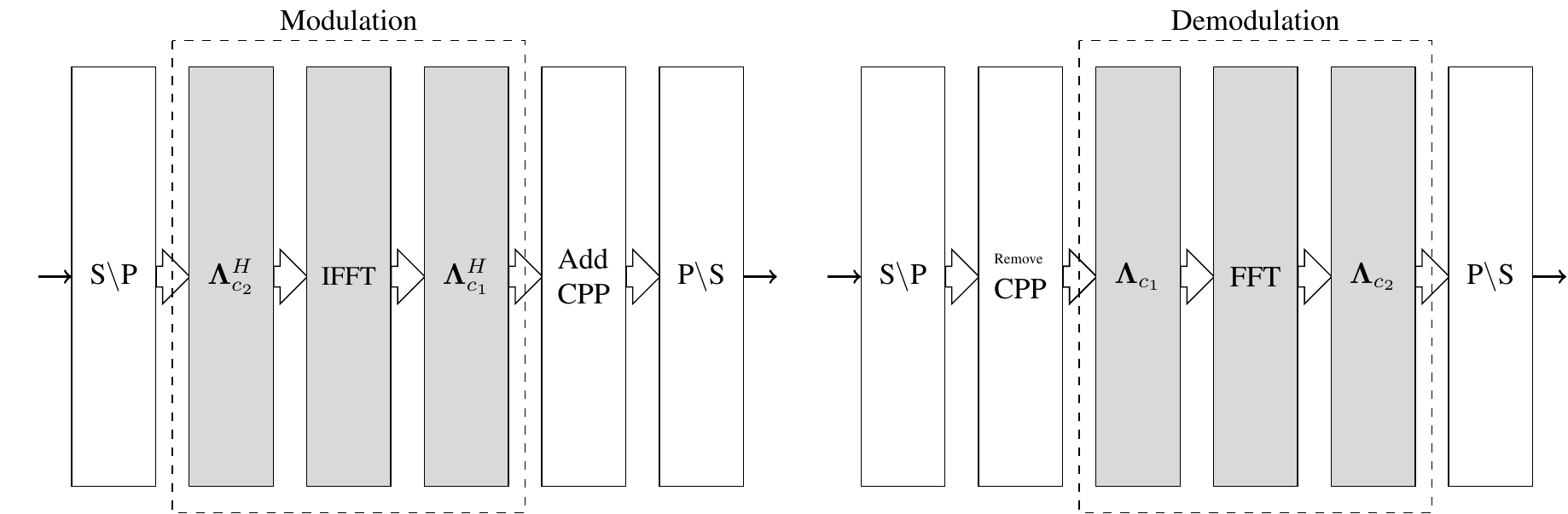}
  \caption{AFDM block diagram}
  \label{fig:AFDM_blokcdiagrma}
\end{figure}
\subsection{Modulation}
Let $\mathbf{x} \in \mathbb{A}^{N\times 1}$ denote the vector of (QAM) information symbols in the discrete affine Fourier domain, where $\mathbb{A} = \{a_0, \cdots, a_{Q-1} \}$ represents the QAM alphabet. Note that $\mathbb{A} \subset \mathbb{Z}[j]$ where $\mathbb{Z}[j]$ denotes the number field whose elements have the form $z_r + jz_i$, with integer $z_r$ and $z_i$. The modulated signal can be written as 
\begin{equation}
\label{eq:mod}
    s_n = \sum_{m = 0}^{N-1}x_m\phi_n(m), \quad n= 0 , \cdots, N-1
\end{equation}
where $\phi_n(m) = e^{j2\pi (c_1n^2 + c_2m^2 + nm/N)}/\sqrt{N}$. In matrix form, \eqref{eq:mod} becomes
\begin{equation}
    {\mathbf s} = {\mathbf \Lambda}_{c_1}^H{\mathbf F}^H{\mathbf \Lambda}_{c_2}^H\mathbf{x}.
\end{equation}

Similarly to OFDM, the proposed scheme needs some kind of prefix to deal with multipath propagation and to make the channel seemingly lie in a periodic domain. However, due to different signal periodicity, a {\emph{chirp-periodic}} prefix (CPP) has to be used instead of an OFDM cyclic prefix (CP). Indeed, an $L$-long prefix occupying the positions of the negative-index time-domain samples should be transmitted, where $L$ is any integer greater than or equal to the value in samples of the maximum delay spread of the wireless channel. With the periodicity defined in \eqref{chrip_per}, the prefix is
\begin{equation}
    s_{n} = s_{N+n}e^{-j2\pi c_1(N^2+2Nn)},\quad n = -L, \cdots, -1.
\end{equation}
Note that a CPP is simply a CP whenever $2Nc_1$ is an integer value and $N$ is even.
\subsection{Channel}
\label{sec:channel}
After parallel to serial conversion and transmission over the channel, the received samples are 
\begin{equation}
    r_n = \sum_{l = 0}^{\infty}s_{n-l}g_n(l)\label{r_n} + w_n,
\end{equation}
where $w_n\sim\mathcal{CN}\left(0,N_0\right)$ is additive Gaussian noise and
\begin{equation}
    g_n(l) = \sum _{i=1}^{P} h_{i}e^{-j2\pi f_in}\delta(l - l_i)
\end{equation}
is the impulse response of channel at time $n$ and delay $l$, where $P\geq1$ is the number of paths, $\delta(\cdot)$ is the Dirac delta function, and $h_i, f_i$ and $l_i$ are the complex gain, Doppler shift (in digital frequencies), and the integer delay associated with the $i$-th path, respectively. Note that this model is general and covers the case where each delay tap can have a Doppler frequency \emph{spread} by simply allowing for different paths $i,j\in\{1,\ldots,P\}$ to have the same delay $l_i=l_j$ while satisfying $f_i\neq f_j$.
We define $\nu_i \triangleq{} N f_i = \alpha_i +a_i $, where $\nu_i\in\left[-\nu_{\max},\nu_{\max}\right]$ is the Doppler shift normalized with respect to the subcarrier spacing, $\alpha_i\in\left[-\alpha_{\max},\alpha_{\max}\right]$ is its integer part while $a_i$ is  the fractional part satisfying $\frac{-1}{2} < a_i \leq \frac{1}{2}$. For the sake of simplifying the diversity analysis, we assume that the fractional parts $a_i$ are zero, which is reasonable since $a_i$ can be neglected for large values of $N$. In addition, we assume that the maximum delay of the channel satisfies $l_{\max}\triangleq\max (l_i)  < N$, and that the CPP length is greater than $l_{\max}-1$ ($L > l_{\max}-1$).

After discarding the CPP, we can write \eqref{r_n} in the matrix form
\begin{equation}
    {\mathbf r} = {\mathbf H}\mathbf{s} + \mathbf{w}
\end{equation}
where $\mathbf{w}\sim\mathcal{CN}\left(\mathbf{0},N_0\mathbf{I}\right)$ and $\mathbf{H}$ is the $N\times N$ matrix 
\begin{equation}
{\mathbf H}= \sum _{i=1}^{P} h_{i} {\mathbf {\Gamma }}_{\mathrm{CPP}_{i}} {\mathbf {\Delta }}_{f_{i}} {\mathbf{\Pi }}^{l_{i}}
\end{equation}
where $\mathbf{\Pi }$ is the permutation matrix
\begin{align} 
{\bf {\Pi }} & = {\left[\begin{array}{cccc} 0 & \cdots & 0 & 1\\ 1 & \cdots & 0 & 0\\ \vdots & \ddots & \ddots & \vdots \\ 0 & \cdots & 1 & 0 \end{array}\right]_{N \times\; N}},
\end{align}
${\mathbf {\Delta }}_{f_{i}}$ is the $N\times N$ diagonal matrix
\begin{equation}
\Delta_{f_i} ={\mathrm{diag}}(e^{-j2\pi f_in}, n = 0, 1, \, \ldots,\, N-1)
\end{equation}
and ${\mathbf {\Gamma }}_{\mathrm{CPP}_{i}} $ is a $N\times N$ diagonal matrix 
\begin{equation}
    \begin{multlined}
    \mathbf {\Gamma }_{\mathrm{CPP}_{i}} = \\ {\mathrm{diag}}(\begin{cases}e^{-j2\pi c_1(N^{2}-2N(l_i - n))}&n < l_i\\ 1&n\geq l_i\end{cases}, n = 0,\, \ldots,\, N-1).
    \end{multlined}
    \label{eq:CP_matrix}
\end{equation}
We can see from \eqref{eq:CP_matrix} that whenever $2Nc_1$ is an integer and $N$ is even, $\mathbf {\Gamma }_{\mathrm{CPP}_{i}}=\mathbf {I}$.

\subsection{Demodulation}
At the receiver side, the DAFT domain output symbols are obtained by
\begin{equation}
    y_m = \sum_{n = 0}^{N-1}r_n\phi_n^*(m).
    \label{eq:y_received}
\end{equation}
In matrix representation, the output can be written as
\begin{equation}
\begin{aligned}
    {\mathbf y} =& {\mathbf \Lambda}_{c_2}{\mathbf F}{\mathbf \Lambda}_{c_1}\mathbf{r}\\
    =& \sum _{i=1}^{P} h_{i} {\mathbf \Lambda}_{c_2}{\mathbf F}{\mathbf \Lambda}_{c_1}{\mathbf {\Gamma }}_{\mathrm{CPP}_{i}} {\mathbf {\Delta }}_{f_{i}} {\mathbf{\Pi }}^{l_{i}} {\mathbf \Lambda}_{c_1}^H{\mathbf F}^H{\mathbf \Lambda}_{c_2}^H{\mathbf x}+ \widetilde {\mathbf w}\\
    =& {\mathbf H}_{\mathrm{eff}} {\mathbf x}+ \widetilde {\mathbf w}\label{eq_rec}
\end{aligned}
\end{equation}
where  ${\mathbf H}_{\mathrm{eff}} \triangleq {\mathbf \Lambda}_{c_2}{\mathbf F}{\mathbf \Lambda}_{c_1}{\mathbf H}{\mathbf \Lambda}_{c_1}^H{\mathbf F}^H{\mathbf \Lambda}_{c_2}^H$ and $ \widetilde {\mathbf w}$ is ${\mathbf \Lambda}_{c_2}{\mathbf F}{\mathbf \Lambda}_{c_1}\mathbf{w}$. Since ${\mathbf \Lambda}_{c_2}{\mathbf F}{\mathbf \Lambda}_{c_1}$ is a unitary matrix, $\widetilde {\mathbf w}$ and ${\mathbf w}$ have the same covariance.

\subsection{Input-Output Relation}
Considering ${\mathbf H}_{\mathrm{eff}} = \sum_{i = 1}^{P}h_i\mathbf{H}_i$, \eqref{eq_rec} can be rewritten as
\begin{equation}
    \mathbf{y} = \sum_{i = 1}^{P}h_i\mathbf{H}_i\mathbf{x} + \Tilde{\mathbf{w}}\label{eq:y_output}
\end{equation}
with
\begin{equation}
    \mathbf{H}_i \triangleq {\mathbf \Lambda}_{c_2}{\mathbf F}\underbrace{{\mathbf \Lambda}_{c_1}{\mathbf {\Gamma }}_{\mathrm{CPP}_{i}} {\mathbf {\Delta }}_{f_{i}} {\mathbf{\Pi }}^{l_{i}}{\mathbf \Lambda}_{c_1}^H}_{\mathbf{A}_i}{\mathbf F}^H{\mathbf \Lambda}_{c_2}^H.\label{H_i}
\end{equation}
It can be shown that the element of $\mathbf{A}_i$ at row $n$ and column $(n-l_i)_N$ is
\begin{equation}
\begin{aligned}
    \mathbf{A}_i(n, (n-l_i)_N) = \frac{1}{h_i}g_n(l_i)e^{j2\pi c_1(l_i^2-2nl_i)}
\end{aligned}
\end{equation}
where $(\cdot)_N$ is the modulo $N$ operation.

From \eqref{H_i}, the element of $\mathbf{H}_i$ at row $p$ and column $q$ writes as
{\footnotesize	
\begin{equation}
    \begin{aligned}
    &\mathbf{H}_{i}(p, q) = \frac{e^{-j2\pi c_2(p^2 - q^2)}}{N}\sum_{n = 0}^{N-1} e^{-j\frac{2\pi}{N}(pn - q(n-l_i)_N)}\mathbf{A}_i(n, (n-l_i)_N)\\
    &= \frac{1}{N}e^{j\frac{2\pi}{N}(Nc_1l_i^2 - ql_i + Nc_2(q^2 - p^2))}
    \sum_{n = 0}^{N-1} e^{-j\frac{2\pi}{N}((p-q + \nu_i + 2Nc_1 l_i)n)}.\label{eq:H_i_general}
\end{aligned}
\end{equation}
}

As mentioned before, $\nu_i$ is assumed to be integer valued for all $i\in\{1,\ldots,P\}$, i.e., $\nu_i=\alpha_i$. Moreover, if $c_1$ is chosen such that $2Nc_1 l_i$ is an integer,
\eqref{eq:H_i_general} writes as
\begin{equation}
    \mathbf{H}_i(p, q) = 
    \begin{cases} e^{j\frac{2\pi}{N}(Nc_1l_i^2 - ql_i + Nc_2(q^2 - p^2))} & q = (p+\mathrm{loc}_i)_N  \\
    0 & {\text{otherwise}}
    \end{cases},
    \label{eq:Hi_p_q}
\end{equation}
where $\mathrm{loc}_i \triangleq \alpha_i + 2Nc_1 l_i$.
Hence, there is only one non-zero element in each row of $\mathbf{H}_i$ and its location in the $p$-th row is $(p+\mathrm{loc}_i)_N$. The input-output relation in \eqref{eq:y_output} becomes
\begin{equation}
    y_p = \sum_{i = 1}^{P}h_ie^{j\frac{2\pi}{N}(Nc_1{l_i^2} - ql_i + Nc_2(q^2 - p^2))}x_{q} + \Tilde{\mathbf{w}},  0 \leq p \leq N-1
    \label{eq:y_output_special}
\end{equation}
where $q = (p+\mathrm{loc}_i)_N$.
\section{AFDM Parameters}

The performance of DAFT-based modulation schemes critically depends on the choice of parameters $c_1$ and $c_2$. OCDM uses $c_1 = c_2 = \frac{1}{2N}$. As shown later, this choice provides better diversity performance than OFDM, but fails to achieve full diversity in LTV channels. In DAFT-OFDM, $c_2 = 0$ while $c_1$ is adapted to the delay-Doppler channel profile to minimize ICI. While this choice simplifies detection, we show that it fails to achieve full diversity. In the proposed AFDM, we set $c_1$ and $c_2$ in a way that the DAFT domain impulse response constitutes a full delay-Doppler representation of the channel. This choice allows AFDM to achieve full diversity in LTV channels, as shown in Section \ref{sec:diversity}.

In order for the DAFT domain impulse response to constitute a full delay-Doppler representation of the channel, the unique non-zero entry in each row of $\mathbf{H}_i$ for each path $i\in\{1,\ldots,P\}$ should not coincide with the position of the unique non-zero entry of the same row of $\mathbf{H}_j$ for any $j\in\{1,\ldots,P\}$ such that $j\neq i$. Referring to \eqref{eq:Hi_p_q} shows that the location of each path depends on its delay-Doppler information and AFDM parameters. For the integer Doppler shift case, $\mathrm{loc}_i$ is in the following range
\begin{equation}
    -\alpha_{\rm{max}} + 2Nc_1l_i \leq \mathrm{loc}_i \leq 
    \alpha_{\rm{max}} + 2Nc_1l_i.
    \label{eq:range_integer}
\end{equation}
Therefore, for the positions of the non-zero entries of $\mathbf{H}_i$ and $\mathbf{H}_j$ to not overlap, the intersection of the corresponding ranges of $\mathrm{loc}_i$ and $\mathrm{loc}_j$ should be empty, i.e, 
\begin{equation}
\begin{aligned}
    &\{-\alpha_{\rm{max}} + 2Nc_1l_i, ..., \alpha_{\rm{max}} + 2Nc_1l_i\} \cap\\
    &\{-\alpha_{\rm{max}} + 2Nc_1l_j, ..., \alpha_{\rm{max}} + 2Nc_1l_j\}= \emptyset.
    \label{eq:intersec_int}
\end{aligned}
\end{equation}
If two paths have the same delays ($l_i = l_j$) but different Doppler shifts, then they always occupy two distinct positions in the DAFT domain. For the paths with different delays ($l_i \neq l_j$) assuming $l_j>l_i$, satisfying \eqref{eq:intersec_int} is equivalent to the constraint
\begin{equation}
    2Nc_1 > \frac{2\alpha_{\rm{max}}}{l_j - l_i}.
\end{equation}
If there is no sparsity in the time-domain impulse response of the channel, then the minimum value of $l_j - l_i $ is one and $c_1$ should satisfy
\begin{equation}
\label{eq:opt_c1}
    \boxed{c_1 = \frac{2\alpha_{\max} + 1}{2N}.}
\end{equation}
\begin{figure}
  \centering
  \includegraphics[scale=.25]{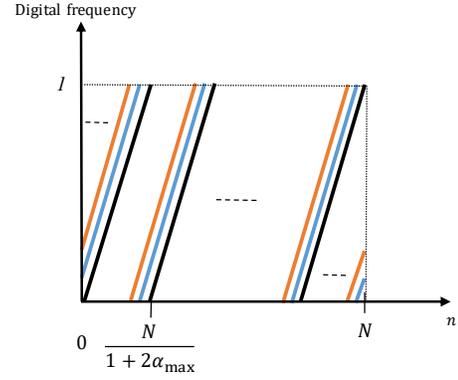}
  \vspace{-3mm}
  \caption{Time-frequency representation of each chirp in AFDM using its $c_1$ given in \eqref{eq:opt_c1}}
    \vspace{-2mm}

  \label{fig:TF_rep}
\end{figure}
With this $c_1$, the time-frequency representation of chirps is shown in Fig. \ref{fig:TF_rep}. Moreover, the only remaining condition for the DAFT-domain impulse response to constitute a full delay-Doppler representation of the channel is to ensure that the non-zero entries of any two matrices $\mathbf{H}_{i_{\min}}$ and $\mathbf{H}_{i_{\max}}$ corresponding to paths $i_{\min}$ and $i_{\max}$ with delays $l_{i_{\min}}\triangleq\min_{i=1\cdots P}l_i$ and $l_{i_{\max}}\triangleq\max_{i=1\cdots P}l_i$ respectively do not overlap due to the modular operation in \eqref{eq:Hi_p_q}. This overlapping never occurs if $2\alpha_{\rm{max}}l_{\rm{max}} + 2\alpha_{\rm{max}} + l_{\rm{max}}< N$. Since the wireless channels are typically underspread, i.e., $l_{\rm{max}} \ll N$ and $\alpha_{\rm{max}} \ll N$, this condition can be satisfied even with moderate values of $N$. 
\begin{figure}
  \centering
  \includegraphics[scale=.55]{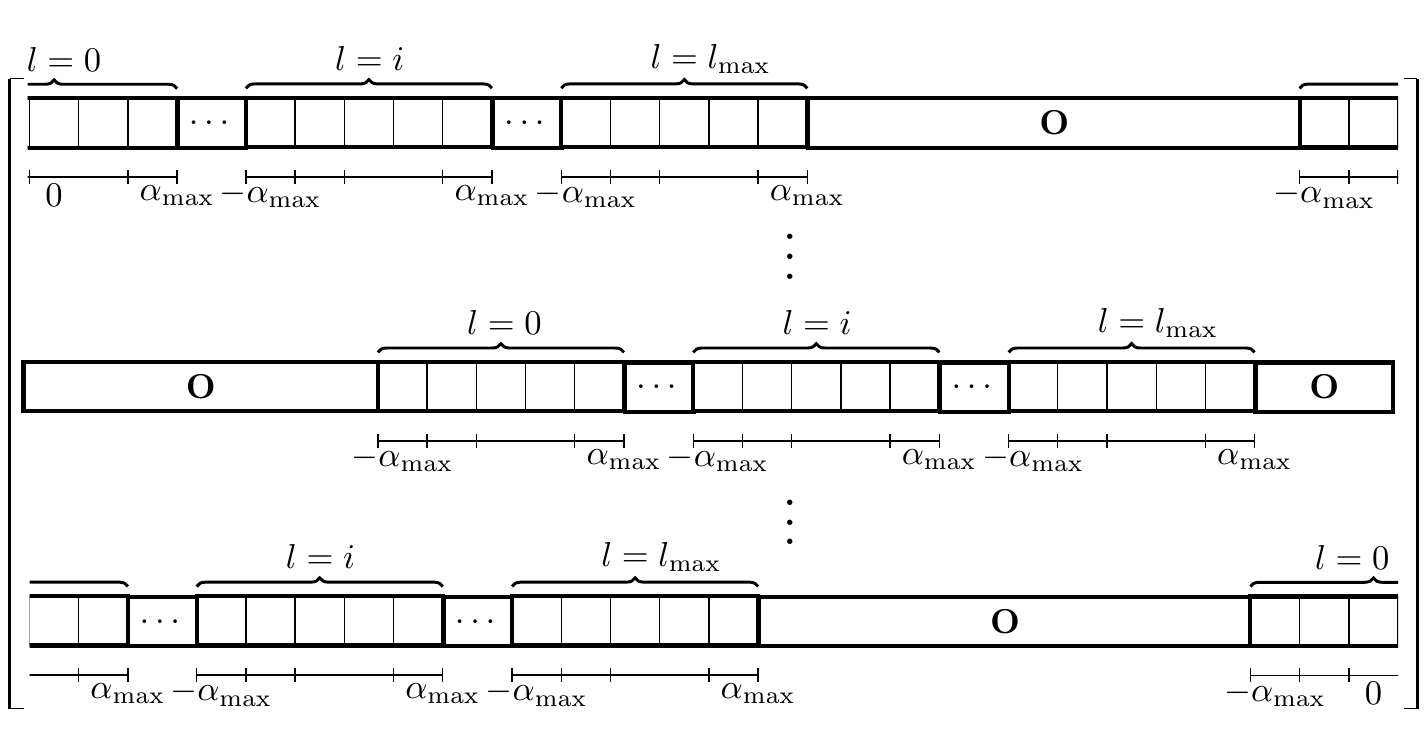}
  \vspace{-4mm}
  \caption{Structure of $\textbf{H}_{\mathrm{eff}}$ in AFDM}
  \vspace{-4mm}
  \label{fig:Channelpic}
\end{figure}
With this parameter setting, channel paths with different delay values or different Doppler frequency shifts get separated in the DAFT domain, resulting in $\textbf{H}_{\mathrm{eff}}$ having the structure shown in Fig. \ref{fig:Channelpic}. Thus, we get a delay-Doppler representation of the channel in the DAFT domain since the delay-Doppler profile can be determined from the positions of the non-zero entries in any row of $\textbf{H}_{\mathrm{eff}}$. This feature can neither be obtained by DAFT-OFDM (since its conceptual target is making the effective channel matrix as close to being diagonal as possible to reduce ICI), nor with OCDM (since setting $c_1 =\frac{1}{2N}$, there might exist two paths $i\neq j$ such that the non-zero entries of $\mathbf{H}_i$ and $\mathbf{H}_j$ coincide under some delay-Doppler profiles of the channel). We next show that this unique feature of AFDM translates into its optimality in terms of achievable diversity order in LTV channels.

\section{Diversity Analysis of AFDM}
\label{sec:diversity}
We start by rewriting \eqref{eq_rec} as 
\begin{equation}
{\mathbf y}= \sum _{i=1}^{P} h_{i} {\mathbf H}_{i}{\mathbf x}+ \widetilde {\mathbf w} = {\mathbf{\Phi }}({\mathbf {x}}) {\mathbf h} + \widetilde {\mathbf w}
\end{equation}
where $\mathbf{h}=[h_1,h_2, \ldots,h_P]^T$ is a $P\times 1$ vector and ${\boldsymbol{\Phi}}({\mathbf{x}})$ is the $N \times P$ concatenated matrix 
\begin{equation}
    {\boldsymbol{\Phi}}({\mathbf x}) = [\mathbf{H}_1\mathbf{x} \mid \ldots \mid \mathbf{H}_P\mathbf{x}].
    \label{eq:phi_x}
\end{equation}
The conditional pairwise error probability (PEP) between $\mathbf{x}_m$ and $\mathbf{x}_n$, i.e., transmitting symbol $\mathbf{x}_m$ and deciding in favor of $\mathbf{x}_n$ at the receiver, is given by
\begin{equation}
P(\mathbf {x}_{m}\rightarrow \mathbf {x}_{n}|\mathbf {h},\mathbf {x}_{m})=Q \left ({\sqrt {\frac {\|{\mathbf{\Phi }}(\mathbf {x}_{m}-\mathbf {x}_{n})\mathbf {h}\|^{2}}{2N_{0}}} }\right)\!.
\end{equation}
By averaging over the channel realizations, the PEP becomes
\begin{equation}
P(\mathbf {x}_{m}\rightarrow \mathbf {x}_{n})=\mathbb {E}_\mathbf{h} \left [{ Q \left ({\sqrt {\frac {~\| \mathbf{\Phi}(\boldsymbol{\delta}^{(m,n)})\mathbf {h}\|^{2}}{2N_0}}\, }\right) }\right]\!
\end{equation}
where $\boldsymbol{\delta}^{(m,n)} \triangleq \mathbf{x}_m - \mathbf{x}_n$. Assuming $h_i$s are i.i.d and distributed as $\mathcal {CN}(0,1/P)$, it can be shown that \cite{tse2005fundamentals}
\begin{equation}
P(\mathbf {x}_{m}\rightarrow \mathbf {x}_{n}) \leq \prod \limits _{l=1}^{r}\frac {1}{1+\,\,\dfrac {\lambda _{l}^{2}}{4PN_0}}\label{upper_bound}
\end{equation}
where $r$ is the rank of matrix $\boldsymbol{\Phi}(\boldsymbol{\delta}^{(m,n)})$ and $\lambda_l$ is its $l$th singular value. Moreover, \eqref{upper_bound} implies that at high values of the signal-to-noise ratio $\mathrm{SNR}\triangleq\frac{1}{N_0}$, the overall bit error rate (BER) is dominated by the PEP with the minimum value of $r$, for all pairs $(m, n), m\neq n$. Hence, the diversity order of AFDM is given by
\begin{equation}
\label{eq:ord_div}
\begin{aligned}
\rho  \triangleq \min _{m,n ~m\neq n}~\text {rank}(\boldsymbol{\Phi }(\boldsymbol{\delta})^{(m, n)})
\leq P\:.
\end{aligned}
\end{equation}
For exposition convenience, we drop $(m, n)$ from ${\delta}^{(m, n)}$. 
The following theorem states that AFDM achieves full diversity  provided that $N$ is large enough.
\begin{theorem}
	\label{theo:main}
For a linear time-varying channel with a maximum delay $l_{\rm{max}}$ and maximum normalized Doppler shift $\alpha_{\rm{max}}$, AFDM with $c_1$ satisfying \eqref{eq:opt_c1} achieves full diversity i.e., $\rho=P$ if 
\begin{align}
    2\alpha_{\rm{max}} + l_{\rm{max}} + 2\alpha_{\rm{max}}l_{\rm{max}} < N \label{eq:cond_1}
\end{align}
\end{theorem}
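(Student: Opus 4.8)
The plan is to turn the diversity claim into a pure linear-algebra statement and then exploit the sparse, shift-structured form of the $\mathbf{H}_i$ given by \eqref{eq:Hi_p_q}. Since $\boldsymbol{\Phi}(\boldsymbol{\delta})$ in \eqref{eq:phi_x} is $N\times P$ with $N>P$, the bound \eqref{eq:ord_div} shows $\rho=P$ is equivalent to: for every nonzero difference $\boldsymbol{\delta}\in\mathbb{Z}[j]^{N}$, the $P$ columns $\mathbf{H}_1\boldsymbol{\delta},\ldots,\mathbf{H}_P\boldsymbol{\delta}$ are linearly independent, i.e. $\sum_i\beta_i\mathbf{H}_i\boldsymbol{\delta}=\mathbf{0}$ forces $\boldsymbol{\beta}=\mathbf{0}$. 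I would first strip the outer unitary factor $\boldsymbol{\Lambda}_{c_2}\mathbf{F}$ from \eqref{H_i}: with $\boldsymbol{\delta}'\triangleq\mathbf{F}^H\boldsymbol{\Lambda}_{c_2}^H\boldsymbol{\delta}\neq\mathbf{0}$, the rank of $\boldsymbol{\Phi}(\boldsymbol{\delta})$ equals that of $[\mathbf{A}_1\boldsymbol{\delta}'\mid\cdots\mid\mathbf{A}_P\boldsymbol{\delta}']$. Using \eqref{eq:opt_c1} and the integer-Doppler assumption, $\mathbf{A}_i$ reduces to a unit-modulus modulation by $\mathrm{loc}_i$ composed with a circular shift by $l_i$, so each column $\mathbf{A}_i\boldsymbol{\delta}'$ is a \emph{time--frequency translate} of the single fixed nonzero vector $\boldsymbol{\delta}'$ at the point $(l_i,\mathrm{loc}_i)$.

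Next I would record the two structural facts that make the geometry work. The choice \eqref{eq:opt_c1} renders the locations $\mathrm{loc}_i=\alpha_i+2Nc_1l_i$ pairwise distinct, and condition \eqref{eq:cond_1} confines all of them to a window of width $2\alpha_{\max}+l_{\max}+2\alpha_{\max}l_{\max}<N$, so the $P$ shifted diagonals never collide modulo $N$ and the $P$ translation--modulation points $(l_i,\mathrm{loc}_i)$ are genuinely distinct. This is exactly the role played by \eqref{eq:cond_1}.

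I would then dispose of the sparse case, which already contains the dominant minimum-rank error events. Working in the original domain via \eqref{eq:Hi_p_q}, the $p$-th entry of the combination is $\sum_i\beta_i\,\eta_i(p)\,\delta_{(p+\mathrm{loc}_i)_N}$ with unit-modulus $\eta_i(p)$. If $\boldsymbol{\delta}$ has a single nonzero entry $\delta_{k_0}$, then $\mathbf{H}_i\boldsymbol{\delta}$ is supported on the single index $(k_0-\mathrm{loc}_i)_N$, and distinctness of the $\mathrm{loc}_i$ makes these supports disjoint, giving $P$ independent columns immediately. More generally, ordering the paths by $\mathrm{loc}_i$ and evaluating the combination at the row that aligns the extreme-location path onto the extreme nonzero coordinate of $\boldsymbol{\delta}$, the window condition \eqref{eq:cond_1} forces every other path to read an index outside the support of $\boldsymbol{\delta}$; this isolates a single term $\beta_i\eta_i(p)\delta=0$, yields $\beta_i=0$, and lets me peel the paths off one at a time, \emph{provided} the support of $\boldsymbol{\delta}$ does not wrap around the shift window.

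The hard part is the opposite extreme, when $\boldsymbol{\delta}$ has (near-)full support: then no row isolates a single path, each of the $N$ equations mixes all $P$ unknowns, and the peeling breaks down. Here the whole theorem rests on the \emph{linear independence of distinct time--frequency shifts of a nonzero vector} applied to $\boldsymbol{\delta}'$ at the distinct points $(l_i,\mathrm{loc}_i)$. I would attack this by selecting a collision-free block of rows (available because \eqref{eq:cond_1} keeps the window strictly inside $[0,N)$) on which the phases $\eta_i$ reduce, after removing common row and column unimodular factors, to a Vandermonde system in the nodes $e^{j2\pi D_i/N}$ with $D_i=-l_i+2Nc_2\,\mathrm{loc}_i$; choosing $c_2$ so that these nodes are pairwise distinct—e.g. an irrational $c_2$, for which $2Nc_2(\mathrm{loc}_i-\mathrm{loc}_j)=l_i-l_j$ is impossible—then combines with $\boldsymbol{\delta}\neq\mathbf{0}$ to defeat any nontrivial relation. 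Verifying that this Vandermonde reduction survives the $\boldsymbol{\delta}'$-weighting (which may vanish on some coordinates), and that the retained rows suffice to pin down all $P$ coefficients for arbitrary support of $\boldsymbol{\delta}$—a point where the Gaussian-integer constraint $\boldsymbol{\delta}\in\mathbb{Z}[j]^{N}$ and the size of $N$ genuinely enter—is the crux of the argument and the step I expect to demand the most care.
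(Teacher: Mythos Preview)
Your structural observations are correct and match the paper: the choice \eqref{eq:opt_c1} makes the $\mathrm{loc}_i$ pairwise distinct, and \eqref{eq:cond_1} prevents modular wrap-around so that the $P$ shifted diagonals of $\mathbf{H}_{\mathrm{eff}}$ never collide. Where your plan diverges---and runs into trouble---is in the technical mechanism you use to turn these facts into full rank of $\boldsymbol{\Phi}(\boldsymbol{\delta})$.

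First, the peeling argument is weaker than you claim. Aligning the extreme-location path onto the extreme nonzero coordinate of $\boldsymbol{\delta}$ isolates a single term only when the \emph{support width} of $\boldsymbol{\delta}$ is smaller than the \emph{minimum} gap $\min_{i\neq j}|\mathrm{loc}_i-\mathrm{loc}_j|$, which can be as small as $1$. Condition \eqref{eq:cond_1} bounds the \emph{total} window of the $\mathrm{loc}_i$ below $N$; it says nothing about the support of $\boldsymbol{\delta}$. So the peeling step handles essentially only the singleton-support case, not a general ``sparse'' regime, and the residual case is not ``(near-)full support'' but almost all of $\mathbb{Z}[j]^N\setminus\{0\}$.

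Second, passing to $\boldsymbol{\delta}'=\mathbf{F}^H\boldsymbol{\Lambda}_{c_2}^H\boldsymbol{\delta}$ and arguing via time--frequency translates destroys precisely the structure the paper exploits: $\boldsymbol{\delta}\in\mathbb{Z}[j]^N$, whereas $\boldsymbol{\delta}'$ is not a Gaussian-integer vector. Your Vandermonde reduction then faces a genuine obstruction (the $\boldsymbol{\delta}'$-weighting may kill all but fewer than $P$ rows of the selected block), and you have no algebraic hook to recover from it.

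The paper's route is much shorter and avoids both issues. It stays in the $\mathbf{H}_i$ domain, fixes any single nonzero entry $\delta_z$, and selects the $P$ rows $p=z-\mathrm{loc}_i$ of $\boldsymbol{\Phi}(\boldsymbol{\delta})$ (done explicitly for $P=2$). The resulting $P\times P$ minor has $\delta_z$ on its diagonal (up to unit-modulus factors from \eqref{eq:Hi_p_q}) and entries $\delta_{z+\mathrm{loc}_j-\mathrm{loc}_i}$ off the diagonal. Its determinant reduces, for $P=2$, to the inequality $\delta_z^2\neq e^{j\frac{2\pi}{N}(l_2-l_1)\mathrm{loc}_d}\,e^{j4\pi c_2\mathrm{loc}_d^2}\,\delta_{z-\mathrm{loc}_d}\delta_{z+\mathrm{loc}_d}$: the left side is a nonzero element of $\mathbb{Z}[j]$, and choosing $c_2$ irrational (or rational and sufficiently smaller than $\tfrac{1}{2N}$) forces the right side out of $\mathbb{Z}[j]$ whenever it is nonzero. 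Thus the Gaussian-integer constraint and the free parameter $c_2$ are used \emph{together} on a single explicit minor, rather than through a Vandermonde system weighted by an unstructured $\boldsymbol{\delta}'$. If you want to pursue general $P$, the natural extension is to analyze the determinant of this $P\times P$ minor in the same spirit, not to linearize through $\mathbf{A}_i\boldsymbol{\delta}'$.
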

\begin{proof}
To prove Theorem \ref{theo:main}, we show that when \eqref{eq:opt_c1} and \eqref{eq:cond_1} hold, there exist values of $c_2$ such that the rank of $\mathbf{\Phi}(\boldsymbol{\delta})$ should be $P$ i.e., such that the $P$ columns of $\mathbf{\Phi}(\boldsymbol{\delta})$ are linearly independent. For the sake of clarity of presentation under the constraint on the number of pages, we give the proof in the special case of a two-path channel ($P = 2$). The proof in the general case follows the same steps but requires a longer development. Assuming $P=2$,   $\boldsymbol{\Phi}(\boldsymbol\delta)$ is written in \eqref{eq:phi_delta} at the bottom of the page.
\begin{figure*}[b]
\begin{equation}
{\boldsymbol{\Phi}(\boldsymbol\delta) = 
\left[{\begin{array}{cc}
        \mathbf{H}_{\rm{eff}}(0, \mathrm{loc}_1)\delta_{\mathrm{loc}_1}
        & \mathbf{H}_{\rm{eff}}(0, \mathrm{loc}_2)\delta_{\mathrm{loc}_2}\\
        \vdots &\vdots\\ 
\mathbf{H}_{\rm{eff}}(N-1, (\mathrm{loc}_1+N-1)_N)\delta_{(\mathrm{loc}_1+N-1)_N}     
        &\mathbf{H}_{\rm{eff}}(N-1, (\mathrm{loc}_2+N-1)_N)\delta_{(\mathrm{loc}_2+N-1)_N}        
\end{array}}\right].
}\label{eq:phi_delta}
\end{equation}
\end{figure*}
We now show that there exists $c_2$ such that the rank of matrix $\boldsymbol{\Phi}(\boldsymbol{\delta})$ is two. Given that we exclude the
case $\boldsymbol{\delta} = 0$ ($\textbf{x}_m \neq \textbf{x}_n$), there is at least one non-zero entry $\delta_z\neq 0$ of $\boldsymbol{\delta}$ for some $z \in \{0, \cdots, N-1\}$. Assume without loss of generality that $ \mathrm{loc}_2<z<N-\alpha_{\max}$.
Matrix $\boldsymbol{\Phi}(\boldsymbol{\delta})$ is of rank two if the following matrix constructed with two rows of $\boldsymbol{\Phi}(\boldsymbol{\delta})$ is full rank
\begin{equation}
{\footnotesize
\left[{\begin{array}{cc}
        {\mathbf{H}_{\rm{eff}}(z-\mathrm{loc}_2, z - \mathrm{loc}_d)\delta_{z - \mathrm{loc}_d}}
        & \mathbf{H}_{\rm{eff}}(z-\mathrm{loc}_2, z)\delta_{z}\\
\mathbf{H}_{\rm{eff}}(z-\mathrm{loc}_1, z)\delta_{z}     
        &\mathbf{H}_{\rm{eff}}(z-\mathrm{loc}_1, z + \mathrm{loc}_d)\delta_{z + \mathrm{loc}_d}        
\end{array}}\right]
}
\end{equation}
where $\mathrm{loc}_d = \mathrm{loc}_2 - \mathrm{loc}_1$. Since a matrix $\left[{\begin{array}{cc} t_1 & t_2 \\ t_3 & t_4 \end{array}}\right]$ is of full rank if $t_1t_4 \neq t_2t_3$, we need $c_2$ to guarantee that the following inequality holds 
\begin{equation}
    \delta_{z}^2 {\neq} e^{j\frac{2\pi}{N}(l_2- l_1)\mathrm{loc}_d}e^{j4\pi c_2\mathrm{loc}_d^2}\delta_{z-\mathrm{loc}_d}\delta_{z+\mathrm{loc}_d}
    \label{eq:delta_0_2}
\end{equation}
Note that $\boldsymbol{\delta} \in \mathbb{Z}[j]^{N\times 1}$. Therefore, since $\delta_{z} \in \mathbb{Z}[j]$ then the right-hand side of \eqref{eq:delta_0_2} should not be in $\mathbb{Z}[j]$ for the inequality to hold. Now, since $\mathrm{loc}_d^2\neq 0$, setting $c_2$ to be either an arbitrary irrational number or a rational number sufficiently smaller than $\frac{1}{2N}$ guarantees that the right-hand side of \eqref{eq:delta_0_2} is not in $\mathbb{Z}[j]$ and thus makes the inequality in \eqref{eq:delta_0_2} hold. Therefore,  $\boldsymbol{\Phi}(\boldsymbol{\delta})$ is a full-rank matrix and AFDM achieves the full diversity of the channel.
\end{proof}

\section{Simulation Results}
\label{sec:simus}
In this section, we compare the BER performance of AFDM to that of DAFT-OFDM\cite{erseghe2005multicarrier}, OCDM\cite{ouyang2016orthogonal} and OTFS\cite{hadani2017orthogonal}. 
In all simulations, BER values are obtained using $10^6$ different channel realizations with complex gains $h_i$ generated as independent complex Gaussian random variables with zero mean and $1/P$ variance. 
\begin{figure}
    \centering
    \includegraphics[scale=.55]{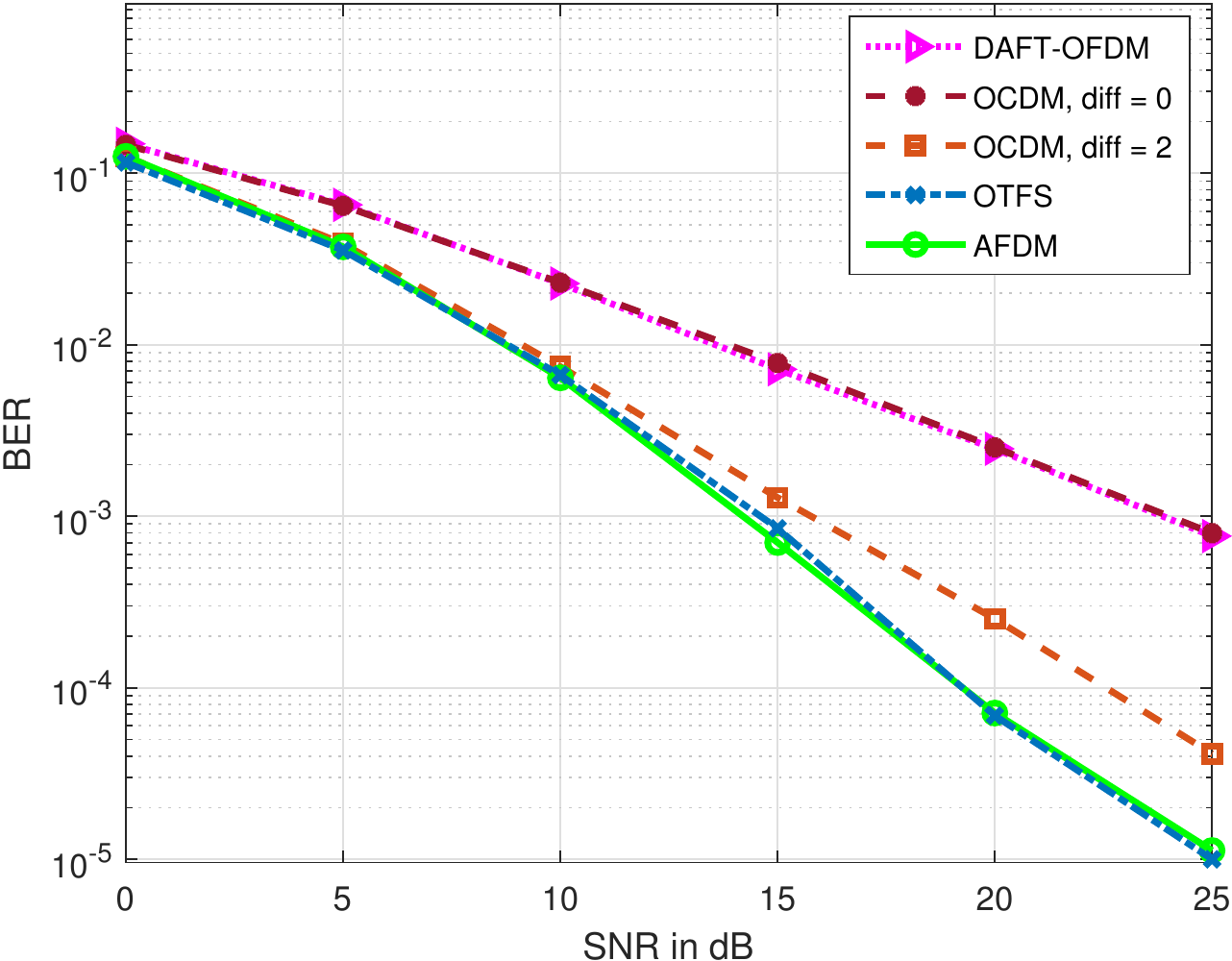}
    \caption{BER performance of  DAFT-OFDM, OCDM, OTFS and AFDM using BPSK in a two-path LTV channel with $l_{
    max} = 1$ and $\alpha_{\max} = 1$ for $N = 8$,  $N_{\rm{OTFS}} = 4$ and $M_{\rm{OTFS}} =2$ using ML detection.}
    \label{fig_L4_c2_1N}
\end{figure}

Fig. \ref{fig_L4_c2_1N} shows the BER performance of the four schemes with $N = 8$ and $N_{\rm{OTFS}} = 4$, $M_{\rm{OTFS}} = 2$ \footnote{$N_{\rm{OTFS}}$ and $M_{\rm{OTFS}}$ are used to discretize the time-frequency signal plane and delay-Doppler plane to $M_{\rm{OTFS}}\times N_{\rm{OTFS}}$ grids. More details can be found in \cite{hadani2017orthogonal}} for OTFS, in a two-path LTV channel with different delay-Doppler profiles, using BPSK symbols and maximum likelihood (ML) detection. ML is employed for the purposes of diversity order comparison; different detection methods can, of course, be used in practice. 
Notation \textit{diff} designates the distance between the location of the two non-zero elements in each row of matrix $\mathbf{H}_{\mathrm{eff}}$. It is observed that DAFT-OFDM has always diversity order one, since it has always one non-zero element in each row of its associated $\mathbf{H}_{\mathrm{eff}}$. The performance of OCDM depends on \textit{diff}. When $\textit{diff} = 0$, OCDM performs poorly and has the same diversity (one) as DAFT-OFDM, mainly due to the possible destructive addition of the two overlapping paths in that case. Even with two non-zero elements in each row of its $\mathbf{H}_{\mathrm{eff}}$ ($\textit{diff} = 2$), full diversity cannot be achieved because in OCDM $c_2 = \frac{1}{2N}$, which does not guarantee that ${\boldsymbol{\Phi}}({\boldsymbol{\delta}})$ is full rank as shown in the proof of Theorem \ref{theo:main}. In sharp contrast, AFDM achieves full diversity, mainly due to the path separation by tuning $c_1$ and setting $c_2$ to be an arbitrary irrational number or a rational number sufficiently smaller than $\frac{1}{2N}$. Expectedly, AFDM has the same BER performance as OTFS.
\begin{figure}
    \centering
    \includegraphics[scale=.55]{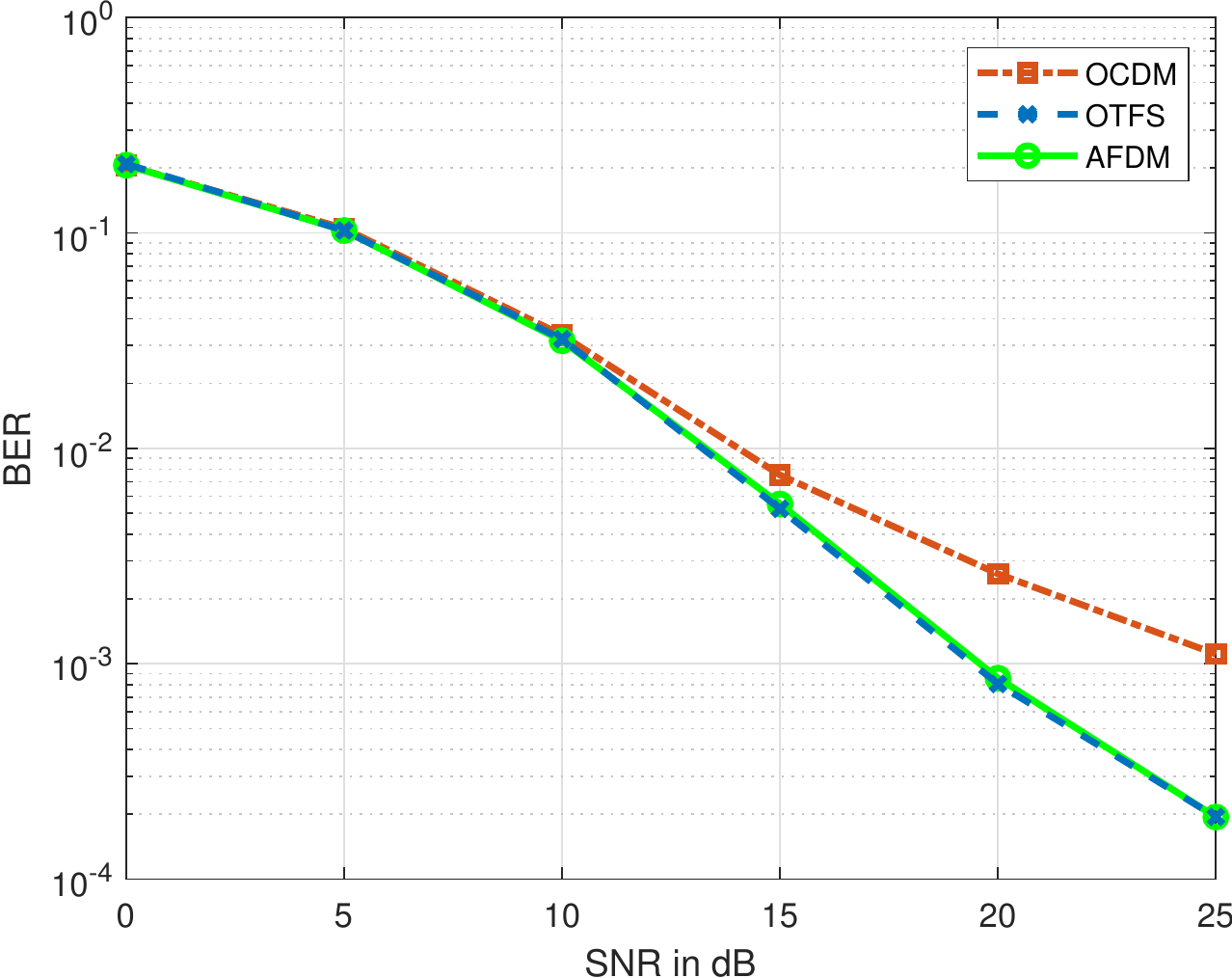}
    \caption{BER performance of OCDM, OTFS and AFDM using QPSK in a 21-path LTV channel with $l_{\max} = 2$ and $\alpha_{\max} = 3$ for $N = 64$,  $N_{\rm{OTFS}} = 8$ and $M_{\rm{OTFS}} =8$ using MMSE detection.}
    \label{fig:P-4_LMMSE}
\end{figure}

Fig. \ref{fig:P-4_LMMSE} shows the performance of OCDM, AFDM with $N = 64$ and OTFS with $N_{\rm{OTFS}} = 8$ and $M_{\rm{OTFS}} = 8$, using QPSK symbols and minimum mean square error (MMSE) detection in a 21-path LTV channel with $l_{\max} = 2$ and $\alpha_{\max} = 3$. For each delay tap, there are 7 paths with different Doppler shifts from -3 to 3. We observe that OCDM exhibits the worst performance due to the destructive addition of overlapping paths in the effective channel matrix. In contrast, AFDM does not experience this destructive effect thanks to its path separation. Moreover, we can see that AFDM and OTFS have again the same performance in terms of BER. As mentioned before, although OTFS achieves the same effective full diversity as AFDM, it suffers from excessive pilot overhead as each pilot symbol needs at least $(2l_{\max} + 1)(4\alpha_{\max} + 1)-1$ 
guard symbols to avoid data/pilot interference \cite{raviteja2019embedded} due to the 2D structure of its underlying transform. As for AFDM, only
$(2l_{\max}+2)(2\alpha_{\max}+1) - 2$ guard symbols are needed as can be verified from the structure of $\mathbf{H}_{\mathrm{eff}}$ given in Fig. \ref{fig:Channelpic}. This key difference, which will presented in detail in a longer version of this paper, entails a significant throughput gap in favor of AFDM. The throughput gap increases with the number of required orthogonal pilot transmissions.

\section{Conclusions}
In this paper, we have proposed AFDM, a new waveform based on multiple discrete-time orthogonal chirp signals. Chirps are generated using the discrete affine Fourier transform that is characterized by two parameters. We have derived its input-output signal relation on doubly dispersive channels and have set the AFDM parameters so that the DAFT domain channel impulse response constitutes a full representation of its delay-Doppler profile. Our analytical results have shown that AFDM always achieves full diversity in doubly dispersive channels. AFDM is a promising next generation waveform for high mobility communications, which outperforms OFDM and other DAFT-based multicarrier schemes, while having advantages over OTFS in terms of pilot and user multiplexing overhead.

\bibliographystyle{IEEEtran}
\bibliography{IEEEabrv,Citations}
\end{document}